\documentclass[11pt]{article}
\usepackage[height=220mm,width=150mm]{geometry}
\usepackage{array}
\usepackage{color}
\usepackage{mathrsfs}
\usepackage{hyperref}
\usepackage{amssymb}
\usepackage{amsmath}
\usepackage{amsthm}
\usepackage{color}
\allowdisplaybreaks
\usepackage{graphicx}
\usepackage{subfigure}
\usepackage{tikz}
\usepackage[pagewise]{lineno}
\usetikzlibrary{shapes.arrows,chains,positioning}
\newtheorem{Theorem}{Theorem}[section]

\newtheorem{Definition}[Theorem]{Definition}
\newtheorem{Proposition}[Theorem]{Proposition}

\def\nocolor#1{}

\begin{document}
%
\title{Stable Recovery of Weighted Sparse Signals from Phaseless Measurements via Weighted $l_1$ Minimization\thanks{This work was partially supported by the National Natural Science Foundation of China(Grant Nos. 11801256 and 12061044).}}

\author{Haiye Huo\\
Department of Mathematics, School of Science, Nanchang University, \\Nanchang~330031, Jiangxi, China\\
\mbox{} \\
Email:  hyhuo@ncu.edu.cn}

\date{}
\maketitle

\textbf{Abstract:}\,\,
The goal of phaseless compressed sensing is to recover an unknown sparse or approximately sparse signal from the magnitude of its measurements. However, it does not take advantage of any support information of the original signal. Therefore, our main contribution in this paper is to extend the theoretical framework for phaseless compressed sensing to incorporate with prior knowledge of the support structure of the signal. Specifically,
we investigate two conditions that guarantee stable recovery of a weighted $k$-sparse signal via weighted $l_1$ minimization without any phase information. We first prove that the weighted null space property (WNSP) is a sufficient and necessary condition for the success of weighted $l_1$ minimization for weighted $k$-sparse phase retrievable. Moreover, we show that if a measurement matrix satisfies the strong weighted restricted isometry property (SWRIP), then the original signal can be stably recovered from the phaseless measurements.

\textbf{Keywords:}
Compressed sensing; Weighted $l_1$ minimization; Phase retrieval; Weighted Sparsity

\textbf{2010 Mathematics Subject Classification}: 90C90, 94A12.


\section{Introduction}\label{sec:S0}
Over the past few years, compressed sensing \cite{CT2005,Donoho2006,EK2012,LES2019} has attracted considerable attention in various fields including signal processing, optics, and information theory. Basically, the goal of compressed sensing is to recover a sparse signal of interest from a small number of (noisy) linear measurements via the $l_p$ minimization with $0<p\leq1$ \cite{CWX2010a,ML2011,Sun2010,ZKL2017}. It is known that the restricted isometry property (RIP) and the null space property (NSP) are two typical conditions on the measurement matrix such that the exact/stable recovery of sparse signals can be guaranteed \cite{ACP2012,CDD2009}.

The $l_1$ minimization has been most widely used for sparse signal recovery in the compressed sensing literature, because of its convexity and easy computation. However,
such traditional $l_1$ minimization does not incorporate any prior support information of the original signal, which might help improve overall performance. In many practical applications, the support information of a signal is usually available as a prior. By exploiting such partial support information, there has been a large amount of research on the recovery of a sparse signal via the weighted $l_1$ minimization \cite{BMP2008,CL2019,FMS2012,GC2019,Jacques2010,MDR2017,VL2010}.
Moreover, based on the weighted $l_1$ minimization, a weighted sparse recovery problem has been intensively studied in \cite{BW2016,Flinth2016,HSX2018,RW2016} as well, in which a weight function is considered into the sparsity structure. Specifically, given a weight function $\mathbf{w}=\{w_i\}_{i=1}^N$ with $w_i\ge 1$, a signal $\mathbf{x}\in \mathbb{C}^N$ is called a weighted $k$-sparse signal, if
\begin{equation}\label{W-Sp}
\|\mathbf{x}\|_{\mathbf{w},0}=\sum_{\{i:\;|x_i|>0\}}w_i^2\le k.
\end{equation}
Then, the recovery of a weighted $k$-sparse signal by using the weighted $l_1$ minimization is referred to as the weighted sparse recovery problem. For more details, we refer the reader to \cite{BW2016,Flinth2016,HSX2018,RW2016}.

Phase retrieval is a fundamental problem of recovering a signal only from the magnitude of its linear measurements. It can be found in several areas,  such as radar signal processing \cite{Jaming2014}, quantum mechanics \cite{LR2009}, and optics \cite{SEC2015,Walter1963}. In these areas of application, the original signal to be recovered is often sparse. Hence, it is very natural to combine phase retrieval with compressed sensing, which is called phaseless compressed sensing \cite{Gao2017,IVW2017,VX2016,YHW2017}. The goal of phaseless compressed sensing is to reconstruct an unknown sparse signal $\mathbf{x_0}\in \mathbb{C}^N$ from the magnitude of its noisy measurements $\mathbf{y}=|\mathbf{Ax_0}|+\mathbf{e}$, where $\mathbf{A}\in \mathbb{C}^{m\times N}$ is the measurement matrix, $\mathbf{e}$ is the noise vector, and $|\cdot|$ denotes the element-wise absolute value. If the measurement matrix $\mathbf{A}$ satisfies the strong RIP (SRIP) \cite{GWX2016,VX2016} or the NSP \cite{WX2014}, $\mathbf{x}$ can be stably recovered by the following $l_1$ minimization up to a global phase:
\begin{equation}\label{mod:l0}
\min_{\mathbf{x}}\|\mathbf{x}\|_{1} \quad{\mbox{subject to}}\quad\|\mathbf{|Ax|}-|\mathbf{Ax_0}|\|_{2}^2\le\epsilon^2.
\end{equation}
You et al \cite{YHW2017} considered the phaseless compressed sensing via its nonconvex relaxation-$l_p$ minimization with $0<p<1$, and obtained a constant $p^*>0$ such that for any $p\in(0,p^*)$, every optimal solution to the $l_p$ minimization solves the considered problem too.

Obviously, the above $l_1$ minimization (\ref{mod:l0}) does not take into account the support information of the signal. Therefore, Zhou et al \cite{ZY2017} presented that the SRIP and the weighted NSP (WNSP) are two conditions for the success of $k$-sparse signal recovery from phaseless compressed sensing measurements via the weighted $l_1$ minimization when partial support information is available as a prior.
Zhang et al \cite{ZMH2019} used another different way to study the problem of phaseless compressed sensing using partial support information. They proposed two concepts of the partial NSP (P-NSP) and the partial SRIP (P-SRIP); and proved that the P-NSP and the P-SRIP are two exact reconstruction conditions on measurements for the problem of partially sparse phase retrieval.
To the best of our knowledge, there is no study on recovery of a weighted $k$-sparse signal from the phaseless measurements via the weighted $l_1$ minimization. Therefore, to fill this gap, our main contribution of this paper is to build up the theoretical framework for recovery of a weighted $k$-sparse signal from the magnitude of its measurements. To be specific, the mathematical model of the weighted $l_1$ minimization is represented as follows:
\begin{equation}\label{model:wl}
\min_{\mathbf{x}}\|\mathbf{x}\|_{\mathbf{w},1},\quad \mbox{subject to}\quad \||\mathbf{Ax}|-|\mathbf{Ax_0}|\|_2^2\le\epsilon^2,
\end{equation}
where the weighted $l_1$-norm $\|\mathbf{x}\|_{\mathbf{w},1}$ is given by
\[
\|\mathbf{x}\|_{\mathbf{w},1}=\sum_{i=1}^{N}w_i|x_i|,
\]
and $\mathbf{w}=\{w_i\}_{i=1}^N$ is the weight function with $w_i\ge 1$.
In this paper, we first give a sufficient and necessary condition, i.e., the WNSP, which can guarantee the unique recovery of a weighted $k$-sparse signal up to a global phase. Moreover, for the noisy setting, we propose a new concept, called the strong weighted RIP (SWRIP), and it is proved to be another complementary sufficient condition for stable recovery.

The rest of this paper is organized as follows. In Section~\ref{sec:S2}, we introduce the WNSP and show that it is a sufficient and necessary condition for exactly reconstructing a weighted sparse signal in phaseless compressed sensing. In Section~\ref{sec:S3}, we propose a new concept, i.e., SWRIP, and prove that stable recovery of a weighted $k$-sparse signal can be guaranteed if the measurement matrix satisfies the SWRIP.  Finally, we conclude this paper in Section~\ref{sec:S6}.

\emph{Notations:}
Let $[1:m]=\{1,2,\cdots,m\}$, and $\mathbf{A}=(a_1,a_2,\cdots,a_m)^T\in \mathbb{C}^{m\times N}$ be the measurement matrix. The null space of $\textbf{A}$ is given by
\[
\mathcal{N}(\mathbf{A})=\{\mathbf{x}:\;\mathbf{Ax}=\mathbf{0}\}.
\]
For a vector $\mathbf{x}\in\mathbb{R}^N$, its entries are denoted as $x_i,\; 1\leq i\leq N$. The complement of a set $\mathcal{S}\subset\{1,2,\cdots,N\}$ is defined by $\mathcal{S}^{c}=\{1,2,\cdots,N\}\backslash\mathcal{S}$.
$\mathbf{x}_{\mathcal{S}}$ is denoted as the sub-vector of $\mathbf{x}$, whose entries only with indices in $\mathcal{S}$ are kept.
For a weight $\mathbf{w}\in\mathbb{R}^N$, the weighted cardinality of a set $\mathcal{S}\subset\{1,2,\cdots,N\}$ is {\nocolor{red}{denoted}} as $w(\mathcal{S})=\sum_{i\in\mathcal{S}}w_i^2$.
The best weighted $k$-term approximation error is defined as
\[
\sigma_{k}(\mathbf{x})_{\mathbf{w},1}:=\min_{\mathbf{z}\in\Sigma_{\mathbf{w},k}^{N}}\|\mathbf{x}-\mathbf{z}\|_{\mathbf{w},1},
\]
where
\[
\Sigma_{\mathbf{w},k}^{N}=\{\mathbf{x}\in \mathbb{C}^N:\;\|\mathbf{x}\|_{\mathbf{w},0}\le k\}.
\]

\section{The Weighted Null Space Property}\label{sec:S2}

In this section, for any weighted $k$-sparse signal $\mathbf{x_0}\in \sum_{\mathbf{w},k}^N$, we consider the weighted $l_1$ minimization (\ref{model:wl}) without noise:
\begin{equation}\label{model:NW}
\min_{\mathbf{x}}\|\mathbf{x}\|_{\mathbf{w},1},\quad \mbox{subject to}\quad |\mathbf{Ax}|=|\mathbf{Ax_0}|.
\end{equation}
Similar to phaseless compressed sensing for a sparse signal, we explore the WNSP condition for the success of the weighted $l_1$ minimization for weighted $k$-sparse phase retrievable.

\subsection{The Real Case}

We first consider the real case of the problem, i.e., the signal $\mathbf{x}_0$ of interest and its measurement matrix $\mathbf{A}$ are in the real number field. We show that the WNSP is a sufficient and necessary condition for unique recovery of a weighted $k$-sparse signal $\mathbf{x}_0$ from its phaseless measurements up to a global phase.

\begin{Theorem}\label{Thm:PRN}
Given a measurement matrix $\mathbf{A}\in \mathbb{R}^{m\times N}$, the following two statements are equivalent:
\begin{enumerate}
  \item[(a)] For any $x_0\in\Sigma_{\mathbf{w},k}^{N}$, we have
  \[
  \mathop{\arg\min}_{\mathbf{x}\in \mathbb{R}^{N}}\{\|\mathbf{x}\|_{\mathbf{w},1}: |\mathbf{Ax}|=|\mathbf{Ax_0}|\}=\{\pm\mathbf{x_0}\}.
  \]
  \item[(b)] For all $S\subseteq[1:N]$ with $w(S)\le k$, it holds
  \[
  \|\mathbf{u}+\mathbf{v}\|_{\mathbf{w},1}<\|\mathbf{u}-\mathbf{v}\|_{\mathbf{w},1}
  \]
  for all nonzero $\mathbf{u}\in \mathcal{N}(\mathbf{A}_S)$ and $\mathbf{v}\in \mathcal{N}(\mathbf{A}_{S^c})$ satisfying $\mathbf{u}+\mathbf{v}\in\Sigma_{\mathbf{w},k}^N$.
\end{enumerate}
\end{Theorem}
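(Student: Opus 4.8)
The plan is to reduce everything to two elementary facts and then read the equivalence off directly. The first fact is the sign-partition description of the real phaseless constraint: for $\mathbf{x},\mathbf{x_0}\in\mathbb{R}^N$, one has $|\mathbf{Ax}|=|\mathbf{Ax_0}|$ if and only if there is a set $S$ of row indices of $\mathbf{A}$ with $\mathbf{A}_S(\mathbf{x}-\mathbf{x_0})=\mathbf{0}$ and $\mathbf{A}_{S^c}(\mathbf{x}+\mathbf{x_0})=\mathbf{0}$; indeed $|a_i^T\mathbf{x}|=|a_i^T\mathbf{x_0}|$ forces $a_i^T\mathbf{x}=a_i^T\mathbf{x_0}$ or $a_i^T\mathbf{x}=-a_i^T\mathbf{x_0}$, and one assigns $i$ to $S$ in the first case and to $S^c$ in the second (rows with $a_i^T\mathbf{x_0}=0$ may be assigned arbitrarily). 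The second fact is the change of variables $\mathbf{u}=\mathbf{x_0}-\mathbf{x}$, $\mathbf{v}=\mathbf{x_0}+\mathbf{x}$, with inverse $\mathbf{x_0}=\frac12(\mathbf{u}+\mathbf{v})$ and $\mathbf{x}=\frac12(\mathbf{v}-\mathbf{u})$: under it, $\mathbf{x}=\pm\mathbf{x_0}$ corresponds exactly to $\mathbf{u}=\mathbf{0}$ or $\mathbf{v}=\mathbf{0}$, and, since $\|\cdot\|_{\mathbf{w},1}$ is a (positively homogeneous) norm, $\|\mathbf{x_0}\|_{\mathbf{w},1}=\frac12\|\mathbf{u}+\mathbf{v}\|_{\mathbf{w},1}$ and $\|\mathbf{x}\|_{\mathbf{w},1}=\frac12\|\mathbf{u}-\mathbf{v}\|_{\mathbf{w},1}$.

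For (b)$\Rightarrow$(a), fix $\mathbf{x_0}\in\Sigma_{\mathbf{w},k}^{N}$; since $\pm\mathbf{x_0}$ are feasible for (\ref{model:NW}) with common objective value $\|\mathbf{x_0}\|_{\mathbf{w},1}$, it suffices to show that every feasible $\mathbf{x}$ with $\mathbf{x}\neq\pm\mathbf{x_0}$ satisfies $\|\mathbf{x}\|_{\mathbf{w},1}>\|\mathbf{x_0}\|_{\mathbf{w},1}$. For such an $\mathbf{x}$, choose $S$ from the sign-partition and set $\mathbf{u}=\mathbf{x_0}-\mathbf{x}\in\mathcal{N}(\mathbf{A}_S)$, $\mathbf{v}=\mathbf{x_0}+\mathbf{x}\in\mathcal{N}(\mathbf{A}_{S^c})$; both are nonzero because $\mathbf{x}\neq\pm\mathbf{x_0}$, and $\mathbf{u}+\mathbf{v}=2\mathbf{x_0}\in\Sigma_{\mathbf{w},k}^{N}$. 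Applying (b) to $(S,\mathbf{u},\mathbf{v})$ gives $\|\mathbf{u}+\mathbf{v}\|_{\mathbf{w},1}<\|\mathbf{u}-\mathbf{v}\|_{\mathbf{w},1}$, i.e. $\|\mathbf{x_0}\|_{\mathbf{w},1}<\|\mathbf{x}\|_{\mathbf{w},1}$; hence the solution set in (a) is exactly $\{\pm\mathbf{x_0}\}$.

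For (a)$\Rightarrow$(b), I argue by contraposition. If (b) fails there are $S$, a nonzero $\mathbf{u}\in\mathcal{N}(\mathbf{A}_S)$ and a nonzero $\mathbf{v}\in\mathcal{N}(\mathbf{A}_{S^c})$ with $\mathbf{u}+\mathbf{v}\in\Sigma_{\mathbf{w},k}^{N}$ and $\|\mathbf{u}+\mathbf{v}\|_{\mathbf{w},1}\ge\|\mathbf{u}-\mathbf{v}\|_{\mathbf{w},1}$. Put $\mathbf{x_0}=\frac12(\mathbf{u}+\mathbf{v})$ and $\mathbf{x}=\frac12(\mathbf{v}-\mathbf{u})$. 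Then $\mathbf{x_0}\in\Sigma_{\mathbf{w},k}^{N}$, and since $\mathbf{x_0}-\mathbf{x}=\mathbf{u}\in\mathcal{N}(\mathbf{A}_S)$ and $\mathbf{x_0}+\mathbf{x}=\mathbf{v}\in\mathcal{N}(\mathbf{A}_{S^c})$, the sign-partition description shows $|\mathbf{Ax}|=|\mathbf{Ax_0}|$, so $\mathbf{x}$ is feasible; moreover $\mathbf{x}\neq\pm\mathbf{x_0}$ because $\mathbf{u},\mathbf{v}\neq\mathbf{0}$, and $\|\mathbf{x}\|_{\mathbf{w},1}=\frac12\|\mathbf{u}-\mathbf{v}\|_{\mathbf{w},1}\le\frac12\|\mathbf{u}+\mathbf{v}\|_{\mathbf{w},1}=\|\mathbf{x_0}\|_{\mathbf{w},1}$. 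Thus $\mathbf{x}$ is a feasible point distinct from $\pm\mathbf{x_0}$ that is at least as good, so the solution set in (a) properly contains $\{\pm\mathbf{x_0}\}$, contradicting (a).

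I do not expect a genuine obstacle: this is the phaseless null space property for ordinary sparsity with the $\ell_1$-norm replaced throughout by $\|\cdot\|_{\mathbf{w},1}$, and the weight enters only through the latter being a norm, so no new estimate is needed. The points that require a little care are purely organisational: the treatment of rows with $a_i^T\mathbf{x_0}=0$ in the sign-partition, the precise correspondence between $\mathbf{x}=\pm\mathbf{x_0}$ and the vanishing of one of $\mathbf{u},\mathbf{v}$, keeping track of the factor $2$, and observing that the degenerate case $\mathbf{x_0}=\mathbf{0}$ cannot arise in the contrapositive (it would force $\mathbf{v}=-\mathbf{u}$, whence $\|\mathbf{u}+\mathbf{v}\|_{\mathbf{w},1}=0<\|\mathbf{u}-\mathbf{v}\|_{\mathbf{w},1}$, contrary to the failure of (b)).
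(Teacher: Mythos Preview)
Your proof is correct and follows essentially the same approach as the paper's: both directions are handled by the sign-partition characterisation of $|\mathbf{Ax}|=|\mathbf{Ax_0}|$ together with the change of variables between $(\mathbf{x_0},\mathbf{x})$ and $(\mathbf{u},\mathbf{v})$, and the conclusions are read off exactly as you do. Your presentation is arguably a bit cleaner---you isolate the sign-partition fact up front, frame $(b)\Rightarrow(a)$ directly rather than by contradiction, and explicitly dispose of the degenerate case $\mathbf{x_0}=\mathbf{0}$---but the mathematical content matches the paper line by line (up to swapping the roles of $S$ and $S^c$ and an inessential factor of $2$).
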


\begin{proof}
$(a)\Rightarrow(b)$: We suppose that the statement $(b)$ does not hold. Then, there exists a subset $S\subseteq[1:N]$ with $w(S)\le k$,
nonzero $\mathbf{u}\in \mathcal{N}(\mathbf{A}_S)$ and $\mathbf{v}\in \mathcal{N}(\mathbf{A}_{S^c})$ such that
$\mathbf{u}+\mathbf{v}\in\Sigma_{\mathbf{w},k}^N$, and
\[
\|\mathbf{u}+\mathbf{v}\|_{\mathbf{w},1}\ge\|\mathbf{u}-\mathbf{v}\|_{\mathbf{w},1}.
\]
Let $\mathbf{x}_0:=\mathbf{u}+\mathbf{v},\;\mathbf{\hat{x}}:=\mathbf{u}-\mathbf{v}$. Then, $\mathbf{\hat{x}}\neq\pm\mathbf{x}_0$, and
\begin{equation}\label{Thm:PRN1}
\|\mathbf{\hat{x}}\|_{\mathbf{w},1}\le\|\mathbf{x}_0\|_{\mathbf{w},1}.
\end{equation}
Let $a_j^T,\;j=1,2,\cdots,m$ be the rows of the measurement matrix $\mathbf{A}$. From the definitions of $\mathbf{x}_0$ and $\mathbf{\hat{x}}$, we know
\[
2\langle a_j,\mathbf{u}\rangle=\langle a_j,\mathbf{x}_0+\mathbf{\hat{x}}\rangle, \quad j=1,2,\cdots,m,
\]
and
\[
2\langle a_j,\mathbf{v}\rangle=\langle a_j,\mathbf{x}_0-\mathbf{\hat{x}}\rangle, \quad j=1,2,\cdots,m.
\]
Since $\mathbf{u}\in \mathcal{N}(\mathbf{A}_S)$, we have
\[
\langle a_j,\mathbf{x}_0\rangle=-\langle a_j,\mathbf{\hat{x}}\rangle,\;\; {\rm{for}}\;\; j\in S.
\]
Similarly, we get
\[
\langle a_j,\mathbf{x}_0\rangle=\langle a_j,\mathbf{\hat{x}}\rangle,\;\; {\rm{for}}\;\; j\in S^c.
\]
Thus, we obtain
\[
|\mathbf{Ax}_0|=|\mathbf{A\hat{x}}|.
\]
It follows from (\ref{Thm:PRN1}) that $\mathbf{\hat{x}}$ is a solution to (\ref{model:NW}), which contradicts $(a)$.

$(b)\Rightarrow(a)$: We suppose that $(a)$ does not hold. Then, there exists a solution $\mathbf{\hat{x}}\ne\pm\mathbf{x}_0$ to (\ref{model:NW}), i.e.,
\begin{equation}\label{Thm:PRN2}
|\mathbf{A\hat{x}}|=|\mathbf{Ax_0}|,
\end{equation}
and
\begin{equation}\label{Thm:PRN3}
\|\mathbf{\hat{x}}\|_{\mathbf{w},1}\le\|\mathbf{x}_0\|_{\mathbf{w},1}.
\end{equation}
Let $a_j^T,\;j=1,2,\cdots,m$ be the rows of the measurement matrix $\mathbf{A}$. By (\ref{Thm:PRN2}), we know that there exists a subset $S\subseteq[1:m]$ satisfying
\begin{equation}\label{Thm:PRN4}
\langle a_j,\mathbf{x}_0+\mathbf{\hat{x}}\rangle=0,\;\; {\rm{for}}\;\; j\in S,
\end{equation}
and
\begin{equation}\label{Thm:PRN5}
\langle a_j,\mathbf{x}_0-\mathbf{\hat{x}}\rangle=0,\;\; {\rm{for}}\;\; j\in S^c.
\end{equation}
Let $\mathbf{u}:=\mathbf{x}_0+\mathbf{\hat{x}}$, and $\mathbf{v}:=\mathbf{x}_0-\mathbf{\hat{x}}$. Note that $\mathbf{\hat{x}}\ne\pm\mathbf{x}_0$, then combining (\ref{Thm:PRN4}) and (\ref{Thm:PRN5}), we have $\mathbf{u}\in \mathcal{N}(\mathbf{A}_{S})\backslash \{0\}$,\;$\mathbf{v}\in \mathcal{N}(\mathbf{A}_{S^c})\backslash \{0\}$, and $\mathbf{u}+\mathbf{v}=2\mathbf{x}_0\in\Sigma_{\mathbf{w},k}^N$. From $(b)$, we get
\[
\|\mathbf{u}+\mathbf{v}\|_{\mathbf{w},1}<\|\mathbf{u}-\mathbf{v}\|_{\mathbf{w},1},
\]
i.e.,
\[
\|\mathbf{x}_0\|_{\mathbf{w},1}<\|\mathbf{\hat{x}}\|_{\mathbf{w},1},
\]
which contradicts (\ref{Thm:PRN3}). This completes the proof.
\end{proof}

\subsection{The Complex Case}
Next, we consider the same problem in for the complex case. We say that $\mathcal{S}=\{S_1,S_2,\cdots,S_p\}$ is a partition of $[1:m]$, if
\[
S_j\subseteq[1:m], \; \bigcup_{j=1}^pS_j=[1:m],\; S_j\cap S_l=\emptyset \quad{\mbox{for all}}\quad j\ne l.
\]
Let $\mathbb{S}=\{c\in \mathbb{C}:|c|=1\}$. The next theorem is an extension of Theorem~\ref{Thm:PRN}.

\begin{Theorem}\label{Thm:Comp}
Given a measurement matrix $\mathbf{A}\in \mathbb{C}^{m\times N}$, the following two statements are equivalent:
\begin{itemize}
  \item [(a)] For any $\mathbf{x}_0\in \Sigma_{\mathbf{w},k}^{N}$, we have
  \begin{equation}\label{Thm:C0}
  \mathop{\arg\min}_{\mathbf{x}\in \mathbb{C}^N}\{\|\mathbf{x}\|_{\mathbf{w},1}:|\mathbf{Ax}|=|\mathbf{Ax}_0|\}=\{c\mathbf{x}_0:\;c\in \mathbb{S}\}.
  \end{equation}
  \item [(b)] Assume that $S=\{S_1,S_2,\cdots,S_p\}$ is any partition of $[1:m]$, and that $\eta_{j}\in \mathcal{N}(\mathbf{A}_{S_j})\backslash\{0\}$ with
  \begin{equation}\label{Thm:C1}
  \frac{\eta_1-\eta_l}{c_1-c_l}=\frac{\eta_1-\eta_j}{c_1-c_j}\in \Sigma_{\mathbf{w},k}^{N}\backslash\{0\}\quad \mbox{for all}\quad l,\;j\in[2:p],
  \end{equation}
  for some pairwise distinct $c_1,c_2,\cdots,c_p\in \mathbb{S}$. Then, we have
  \begin{equation}\label{Thm:C2}
  \|\eta_j-\eta_l\|_{\mathbf{w},1}<\|c_l\eta_j-c_j\eta_l\|_{\mathbf{w},1},
  \end{equation}
  for all $j,\;l\in[1:p]$ with $j\ne l$.
\end{itemize}
\end{Theorem}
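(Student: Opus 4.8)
The plan is to follow the contradiction scheme of Theorem~\ref{Thm:PRN}; the new feature is that over $\mathbb{C}$ two vectors with equal phaseless measurements agree row by row only up to a unimodular constant, which is why a \emph{partition} $\{S_1,\dots,S_p\}$ of $[1:m]$ takes the place of the pair $S,S^c$. I write $a_i^T$ for the $i$-th row of $\mathbf{A}$ and use freely that $|\mathbf{Ax}|=|\mathbf{Ax}_0|$ is the same as $\langle a_i,\mathbf{x}\rangle=\gamma_i\langle a_i,\mathbf{x}_0\rangle$ for every $i$ and some $\gamma_i\in\mathbb{S}$ (with $\gamma_i$ unconstrained where $\langle a_i,\mathbf{x}_0\rangle=0$).

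For $(a)\Rightarrow(b)$ I would first observe that $(a)$ forces $\mathcal{N}(\mathbf{A})\cap\Sigma_{\mathbf{w},k}^{N}=\{0\}$: a nonzero weighted $k$-sparse $\mathbf{g}\in\mathcal{N}(\mathbf{A})$ would make $\mathbf{0}$ a feasible point of zero weighted $l_1$-norm in \eqref{Thm:C0} for $\mathbf{x}_0=\mathbf{g}$, hence a minimizer, contradicting that the argmin there equals $\{c\mathbf{g}:c\in\mathbb{S}\}$, which excludes $\mathbf{0}$. Now suppose $(b)$ fails: there are a partition $\{S_1,\dots,S_p\}$ with $p\ge2$, vectors $\eta_j\in\mathcal{N}(\mathbf{A}_{S_j})\setminus\{0\}$ and distinct $c_1,\dots,c_p\in\mathbb{S}$ satisfying \eqref{Thm:C1} but with \eqref{Thm:C2} failing for some $j_0\ne l_0$. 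Let $\mathbf{z}$ be the common value in \eqref{Thm:C1}; then $\eta_1-\eta_j=(c_1-c_j)\mathbf{z}$ for all $j$, so $\eta_j-c_j\mathbf{z}$ is independent of $j$ — call it $\mathbf{h}$ — whence $\eta_j=\mathbf{h}+c_j\mathbf{z}$. Since $\langle a_i,\eta_j\rangle=0$ for $i\in S_j$ and $|c_j|=1$, we get $|\langle a_i,\mathbf{h}\rangle|=|\langle a_i,\mathbf{z}\rangle|$ there, and as the blocks cover $[1:m]$, $|\mathbf{A}\mathbf{h}|=|\mathbf{A}\mathbf{z}|$. A short computation using $\eta_{j_0}-\eta_{l_0}=(c_{j_0}-c_{l_0})\mathbf{z}$ and $c_{l_0}\eta_{j_0}-c_{j_0}\eta_{l_0}=(c_{l_0}-c_{j_0})\mathbf{h}$ turns the failure of \eqref{Thm:C2} into $\|\mathbf{h}\|_{\mathbf{w},1}\le\|\mathbf{z}\|_{\mathbf{w},1}$. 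Since $\mathbf{z}\in\Sigma_{\mathbf{w},k}^{N}$, statement $(a)$ with $\mathbf{x}_0=\mathbf{z}$ then forces $\mathbf{h}=c\mathbf{z}$ for some $c\in\mathbb{S}$; hence $\eta_j=(c+c_j)\mathbf{z}$ with $c+c_j\ne0$, so $\mathbf{A}_{S_j}\mathbf{z}=\mathbf{0}$ for every $j$, i.e.\ $\mathbf{z}\in\mathcal{N}(\mathbf{A})\cap\Sigma_{\mathbf{w},k}^{N}\setminus\{0\}$, contradicting the opening observation.

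For $(b)\Rightarrow(a)$ I would suppose $(a)$ fails for some $\mathbf{x}_0\in\Sigma_{\mathbf{w},k}^{N}$; we may take $\mathbf{x}_0\ne\mathbf{0}$ (for $\mathbf{x}_0=\mathbf{0}$ statement $(a)$ is immediate), so there is a feasible $\hat{\mathbf{x}}\notin\{c\mathbf{x}_0:c\in\mathbb{S}\}$ with $\|\hat{\mathbf{x}}\|_{\mathbf{w},1}\le\|\mathbf{x}_0\|_{\mathbf{w},1}$. Pick $\gamma_i\in\mathbb{S}$ with $\langle a_i,\hat{\mathbf{x}}\rangle=\gamma_i\langle a_i,\mathbf{x}_0\rangle$ for each $i$, let $c_1,\dots,c_p$ be the distinct values among the $\gamma_i$, put $S_j=\{i:\gamma_i=c_j\}$, and — if $p=1$ — adjoin an empty block $S_{p+1}=\emptyset$ with any new $c_{p+1}\in\mathbb{S}$ so that $p\ge2$. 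Set $\eta_j=\hat{\mathbf{x}}-c_j\mathbf{x}_0$. Then $\eta_j\in\mathcal{N}(\mathbf{A}_{S_j})$ (by the choice of $\gamma_i$, and trivially on an empty block), $\eta_j\ne\mathbf{0}$ because $\hat{\mathbf{x}}$ is not a global phase of $\mathbf{x}_0$, and $(\eta_1-\eta_l)/(c_1-c_l)=-\mathbf{x}_0\in\Sigma_{\mathbf{w},k}^{N}\setminus\{0\}$ for every $l$, so \eqref{Thm:C1} holds. These data satisfy the hypothesis of $(b)$, and using $\eta_j-\eta_l=(c_l-c_j)\mathbf{x}_0$ and $c_l\eta_j-c_j\eta_l=(c_l-c_j)\hat{\mathbf{x}}$, the conclusion \eqref{Thm:C2} reduces — after dividing by $|c_l-c_j|>0$ — to $\|\mathbf{x}_0\|_{\mathbf{w},1}<\|\hat{\mathbf{x}}\|_{\mathbf{w},1}$, contradicting the choice of $\hat{\mathbf{x}}$.

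The difficulty here is organizational rather than conceptual: one must keep straight which vector is the reference signal and which is the competing minimizer as the implication reverses, and carry the unimodular algebra in $\eta_j=\mathbf{h}+c_j\mathbf{z}$ and in the two norm identities without sign errors. The genuinely delicate point is that the partition built in $(b)\Rightarrow(a)$ must have at least two blocks — \eqref{Thm:C1} and \eqref{Thm:C2} are vacuous when $p=1$ — which is why an empty block is adjoined and why the identity $c_l\eta_j-c_j\eta_l=(c_l-c_j)\hat{\mathbf{x}}$ is needed to make that block useful. A second point requiring care is the degenerate case $\mathbf{h}=c\mathbf{z}$ in $(a)\Rightarrow(b)$, which is exactly why $\mathcal{N}(\mathbf{A})\cap\Sigma_{\mathbf{w},k}^{N}=\{0\}$ is isolated at the start.
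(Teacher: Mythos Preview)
Your proof is correct and follows the same contradiction strategy as the paper's, but your execution is tidier and in two places more careful. In the paper's $(a)\Rightarrow(b)$ direction the authors set $\mathbf{x}_0:=\eta_{j_0}-\eta_{l_0}$, $\hat{\mathbf{x}}:=c_{l_0}\eta_{j_0}-c_{j_0}\eta_{l_0}$ and then verify $|\langle a_k,\mathbf{x}_0\rangle|=|\langle a_k,\hat{\mathbf{x}}\rangle|$ block by block, introducing an auxiliary $\mathbf{y}_0$ for each $S_t$ with $t\ne j_0,l_0$; your global parametrization $\eta_j=\mathbf{h}+c_j\mathbf{z}$ does this in one line and is the same $\mathbf{y}_0$ made uniform. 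More importantly, the paper simply asserts $\hat{\mathbf{x}}\notin\{c\mathbf{x}_0:c\in\mathbb{S}\}$ without argument, whereas you close this gap by first deriving $\mathcal{N}(\mathbf{A})\cap\Sigma_{\mathbf{w},k}^{N}=\{0\}$ from $(a)$ and then showing that $\mathbf{h}=c\mathbf{z}$ would force $\mathbf{z}\in\mathcal{N}(\mathbf{A})$; and in $(b)\Rightarrow(a)$ you explicitly handle the $p=1$ situation by adjoining an empty block, a degenerate case the paper does not discuss. These are genuine refinements, but the underlying scheme---build the partition from the unimodular row-phases, set $\eta_j$ accordingly, and compare the two weighted norms via \eqref{Thm:C2}---is identical to the paper's.
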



\begin{proof}

The proof is similar to those in \cite[Theorem~3.3]{WX2014} and \cite[Theorem~3.2]{Gao2017}.

$(b)\Rightarrow(a)$: Suppose that the statement $(a)$ does not hold, then there exists a solution $\mathbf{\bar{x}}\notin\{c\mathbf{x_0}:\, c\in \mathbb{S}\}$ to (\ref{model:NW}), which satisfies
\begin{equation}\label{Thm:C3}
\|\mathbf{\bar{x}}\|_{\mathbf{w},1}\le\|\mathbf{x}_0\|_{\mathbf{w},1},
\end{equation}
and
\begin{equation}\label{Thm:C4}
|\mathbf{A\bar{x}}|=|\mathbf{Ax}_0|.
\end{equation}
Let $a_j^T,\; j=1,2,\cdots,m$ be the rows of the measurement matrix $\mathbf{A}$. By (\ref{Thm:C4}), we have
\begin{equation}\label{Thm:C5}
\langle a_j,\mathbf{\bar{x}}\rangle=\langle a_j,c_j\mathbf{x}_0\rangle,
\end{equation}
where $c_j\in \mathbb{S},\,j=1,2,\cdots,m$. We define an equivalence relation on $[1:m]$ by $\tilde{c}_j$, that is $j\sim l$, when $c_j=c_l$.
Then, the equivalence relation leads to a partition $\mathcal{S}=\{S_1,S_2\cdots,S_p\}$ of $[1:m]$. Let $c_j:=\tilde{c}_l,\;l\in S_j$. Obviously, $c_j, 1\le j\le p$ are distinct and belong to $\mathbb{S}$.  For any $S_j,\;j=1,2,\cdots,p$, we get
\[
\mathbf{A}_{S_j}\mathbf{\bar{x}}=\mathbf{A}_{S_j}(c_j\mathbf{x}_0).
\]
Let
\begin{equation}\label{Thm:C6}
\eta_j:=c_j\mathbf{x}_0-\mathbf{\bar{x}},\;\; j=1,2,\cdots,p.
\end{equation}
Then, we obtain $\eta_j\in \mathcal{N}(\mathbf{A}_{S_j})\backslash\{0\}$, and
\[
\frac{\eta_1-\eta_l}{c_1-c_l}=\frac{\eta_1-\eta_j}{c_1-c_j}=\mathbf{x}_0\in \Sigma_{\mathbf{w},k}^{N},\quad \mbox{for all}\;\; l,\;j\in[2:p],\; l\ne j.
\]
Hence, from statement (b), we know that
\begin{equation}\label{Thm:C7}
\|\eta_j-\eta_l\|_{\mathbf{w},1}<\|c_l\eta_j-c_j\eta_l\|_{\mathbf{w},1}.
\end{equation}
Substituting (\ref{Thm:C6}) into (\ref{Thm:C7}), we obtain
\[
\|(c_j-c_l)\mathbf{x}_0\|_{\mathbf{w},1}<\|(c_j-c_l)\mathbf{\bar{x}}\|_{\mathbf{w},1},
\]
i.e.,
\[
\|\mathbf{x}_0\|_{\mathbf{w},1}<\|\mathbf{\bar{x}}\|_{\mathbf{w},1},
\]
which is a contradiction with (\ref{Thm:C3}). Hence, the statement (a) holds.

$(a)\Rightarrow(b)$: Suppose that the statement $(b)$ does not hold, then there exists a partition $S=\{S_1,S_2,\cdots,S_p\}$ of $[1:m]$,
 $\eta_j\in \mathcal{N}(\mathbf{A}_{S_j})\backslash\{0\},\;j\in [1:p]$, and some pairwise distinct $c_1,c_2,\cdots,c_p\in \mathbb{S}$ satisfying (\ref{Thm:C1}), and
\begin{equation}\label{Thm:C8}
\|\eta_{j_0}-\eta_{l_0}\|_{\mathbf{w},1}\ge\|c_l\eta_{j_0}-c_j\eta_{l_0}\|_{\mathbf{w},1}
\end{equation}
for some distinct $j_0,\;l_0\in[1:p]$.
Let
\begin{equation}\label{Thm:C10}
\mathbf{x_0}:=\eta_{j_0}-\eta_{l_0}\in\Sigma_{\mathbf{w},k}^N,
\end{equation}
and
\begin{equation}\label{Thm:C9}
\mathbf{\hat{x}}:=c_{l_0}\eta_{j_0}-c_{j_0}\eta_{l_0},\; c_{l_0}\ne c_{j_0}.
\end{equation}
Then, we get
\[
\mathbf{\hat{x}}\notin \{c\mathbf{x}_0, \;c\in \mathbb{S}\},
\]
and
\begin{equation}\label{Thm:add}
\|\mathbf{\hat{x}}\|_{\mathbf{w},1}\le\|\mathbf{x}_0\|_{\mathbf{w},1}.
\end{equation}
Let $a_j^T,\;j=1,2,\cdots,m$ be the rows of the measurement matrix $\mathbf{A}$. Since $\eta_j\in \mathcal{N}(\mathbf{A}_{S_j})\backslash\{0\}$, we have
\[
\langle a_k,\eta_{j_0}\rangle=0\;\; \mbox{or} \,\;\langle a_k,\eta_{l_0}\rangle=0,\;\; k\in S_{l_0}\cup S_{j_0}.
\]
From the definitions of $\mathbf{x}_0$ and $\mathbf{\hat{x}}$ (see (\ref{Thm:C10}) and (\ref{Thm:C9})), we get
\begin{equation}\label{Thm:C11}
|\langle a_k,\mathbf{x}_0\rangle|=|\langle a_k,\mathbf{\hat{x}}\rangle|,\;\; k\in S_{l_0}\cup S_{j_0}.
\end{equation}
For any $k\notin S_{l_0}\cup S_{j_0}$, without loss of generality, we assume that $k\in S_{t},\; t\ne j_0,\;l_0$. Thus, $\langle a_k,\eta_{t}\rangle=0$.
By (\ref{Thm:C1}), we have
\begin{equation}\label{Thm:C12}
\frac{\eta_j-\eta_l}{c_j-c_l}=\frac{\eta_m-\eta_n}{c_m-c_n}\in \Sigma_{\mathbf{w},k}^{N}\backslash\{0\},
\end{equation}
where $j,l,m,n$ are distinct integers.
Let
\[
\mathbf{y}_0:=\frac{\eta_{j_0}-\eta_t}{c_{j_0}-c_t}=\frac{\eta_{l_0}-\eta_t}{c_{l_0}-c_t}.
\]
Thus,
\[
\eta_{j_0}=(c_{j_0}-c_t)\mathbf{y}_0+\eta_t,
\]
and
\[
\eta_{l_0}=(c_{l_0}-c_t)\mathbf{y}_0+\eta_t.
\]
Hence, $\mathbf{x}_0$ and $\mathbf{\hat{x}}$ can be rewritten as
\[
\mathbf{x}_0=(c_{j_0}-c_{l_0})\mathbf{y}_0,
\]
and
\[
\mathbf{\hat{x}}=(c_{j_0}-c_{l_0})c_t\mathbf{y}_0+(c_{l_0}-c_{j_0})\eta_t,
\]
respectively.
Since $\langle a_k,\eta_t\rangle=0$,
we have
\[
|\langle a_k, \mathbf{\hat{x}}\rangle|=|\langle a_k,\mathbf{x}_0\rangle|,\; k\in S_{t}.
\]
Using a similar argument, we obtain
\[
|\langle a_k, \mathbf{\hat{x}}\rangle|=|\langle a_k,\mathbf{x}_0\rangle|,\;\;\mbox{for all}\;\;k.
\]
It follows from (\ref{Thm:add}) that $\mathbf{\hat{x}}\notin\{c\mathbf{x}_0:\;c\in \mathbb{S}\}$ is a solution to (\ref{model:wl}), which is a contradiction with the statement $(a)$. This completes the proof.
\end{proof}

By Theorems \ref{Thm:PRN} and \ref{Thm:Comp}, we know that if the measurement matrix satisfies the WNSP, then an unknown weighted sparse signal can be exactly recovered by solving the weighted $l_1$ minimization model (\ref{model:NW}) up to a global phase. However, it is very hard to check whether the measurement matrix satisfies the condition (i.e., statement $(b)$) in Theorems \ref{Thm:PRN} and \ref{Thm:Comp} or not. To this end, we present another property, called the SWRIP, in the following section, as an alternative way to guarantee the uniqueness of the weighted $l_1$ minimization model (\ref{model:NW}).

\section{The Strong Weighted Restricted Isometry Property}\label{sec:S3}

In phaseless compressed sensing, Gao et al \cite{GWX2016} presented that if the measurement matrix satisfies the SRIP, then the model (\ref{mod:l0}) provides a stable solution. In this section, we propose to generalize the SRIP to the weighted sparsity setting, and investigate the conditions under which the weighted $l_1$ minimization model (\ref{model:NW}) guarantees stable recovery of a weighted sparse signal. In this section, we only focus on studying signals and matrices that are in the real number field.

Before stating the main results, we recall the definition of the weighted RIP (WRIP), and the conditions for stable recovery in the weighted sparse setting for traditional compressed sensing, where the optimization model considered is
\begin{equation}\label{model:cs1}
\min_{\mathbf{x}}\|\mathbf{x}\|_{\mathbf{w},1},\quad \mbox{subject to}\quad \|\mathbf{Ax}-\mathbf{Ax_0}\|_{2}^2\le\epsilon^2.
\end{equation}
\begin{Definition}[WRIP]\cite[Definition~1.3]{RW2016}\label{Def:WRIP}
Given a weight $\mathbf{w}\in\mathbb{R}^{N}$, a matrix $\mathbf{A}\in \mathbb{R}^{m\times N}$ is said to satisfy the WRIP of order $k$ with constant $\delta_{\mathbf{w},k}\in (0,1)$, if
\begin{equation}\label{Def:WRIP:1}
(1-\delta_{\mathbf{w},k})\|\mathbf{x}\|_{2}^2\le\|\mathbf{Ax}\|_2^2\le(1+\delta_{\mathbf{w},k})\|\mathbf{x}\|_2^2
\end{equation}
holds for all weighted $k$-sparse vectors $\mathbf{x}\in \mathbb{R}^{N}$.
\end{Definition}

\begin{Proposition}\cite[Theorem~3.4]{HSX2018}\label{lem:WRIP}
Suppose that $\mathbf{A}\in \mathbb{R}^{m\times N}$ satisfies the WRIP of order $2k$ with constant
\begin{equation}\label{lem:WRIP:A1}
\delta_{\mathbf{w},2k}<\frac{1}{2\sqrt{2}+1}
\end{equation}
for	$k\ge 2\|\mathbf{w}\|_{\infty}^2$. Let $\mathbf{x}\in \mathbb{R}^N$, $\mathbf{y}=\mathbf{Ax}+\mathbf{e}$ with $\|\mathbf{e}\|_2\le\epsilon$,
and $\tilde{\mathbf{x}}$ be the solution to $(\ref{model:cs1})$. Then, we have
\begin{equation}\label{lem:WRIP:A2}
\|\mathbf{x}-\tilde{\mathbf{x}}\|_{2}\le c_1\epsilon+\frac{c_2}{\sqrt{k}}\sigma_{k}(\mathbf{x})_{\mathbf{w},1},
\end{equation}
where
\begin{eqnarray}
c_1&=&\frac{6\sqrt{1+\delta_{\mathbf{w},2k}}}{1-(1+2\sqrt{2})\delta_{\mathbf{w},2k}},\nonumber\\ c_2&=&\frac{4(1+(\sqrt{2}-1)\delta_{\mathbf{w},2k})}{1-(1+2\sqrt{2})\delta_{\mathbf{w},2k}}.\label{lem:WRIP:A3}
\end{eqnarray}
\end{Proposition}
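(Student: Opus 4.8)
The plan is to run the classical RIP-based stability argument of Cand\`es type, adapting each ingredient from ordinary sparsity to weighted sparsity. Write $\mathbf{h} := \tilde{\mathbf{x}} - \mathbf{x}$ for the error. Since $\tilde{\mathbf{x}}$ is feasible for (\ref{model:cs1}) and $\|\mathbf{A}\mathbf{x} - \mathbf{y}\|_2 = \|\mathbf{e}\|_2 \le \epsilon$, the triangle inequality gives the ``tube'' bound $\|\mathbf{A}\mathbf{h}\|_2 \le 2\epsilon$. For the ``cone'' bound, fix an index set $S$ with $w(S) \le k$ realizing the best weighted $k$-term approximation, so that $\|\mathbf{x}_{S^c}\|_{\mathbf{w},1} = \sigma_k(\mathbf{x})_{\mathbf{w},1}$; the optimality $\|\tilde{\mathbf{x}}\|_{\mathbf{w},1} \le \|\mathbf{x}\|_{\mathbf{w},1}$, after splitting both sides over $S$ and $S^c$ and applying the triangle inequality coordinatewise, yields $\|\mathbf{h}_{S^c}\|_{\mathbf{w},1} \le \|\mathbf{h}_S\|_{\mathbf{w},1} + 2\sigma_k(\mathbf{x})_{\mathbf{w},1}$.

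The only genuinely non-routine step is a weighted Stechkin-type block decomposition of $\mathbf{h}_{S^c}$, and this is where I expect the main obstacle to lie. Order the coordinates in $S^c$ by decreasing magnitude $|h_i|$ and greedily partition $S^c$ into consecutive blocks $T_1, T_2, \dots$, choosing each block to have weighted cardinality as close to $k$ from below as possible. Because each weight satisfies $w_i \ge 1$, the overshoot of a greedy step is at most $\|\mathbf{w}\|_\infty^2$, so the hypothesis $k \ge 2\|\mathbf{w}\|_\infty^2$ guarantees $k/2 \le w(T_j) \le k$ for every block but (possibly) the last. Combining $\max_{i \in T_{j+1}} |h_i| \le \|\mathbf{h}_{T_j}\|_{\mathbf{w},1}/w(T_j)$ with $\|\mathbf{h}_{T_{j+1}}\|_2 \le \sqrt{w(T_{j+1})}\,\max_{i \in T_{j+1}} |h_i|$ (using $|T_{j+1}| \le w(T_{j+1})$) and telescoping produces $\sum_{j \ge 2} \|\mathbf{h}_{T_j}\|_2 \le \frac{\rho}{\sqrt{k}}\|\mathbf{h}_{S^c}\|_{\mathbf{w},1}$ for an absolute constant $\rho$ (of order $\sqrt{2}$ under the stated hypothesis). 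The care needed here is precisely that block weighted cardinalities cannot be made exactly $k$, and the loss must be uniformly controlled by $\|\mathbf{w}\|_\infty^2/k$.

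With this decomposition, set $T_0 := S \cup T_1$, so $w(T_0) \le 2k$, and apply the WRIP of order $2k$. From the identity $\|\mathbf{A}\mathbf{h}_{T_0}\|_2^2 = \langle \mathbf{A}\mathbf{h}_{T_0}, \mathbf{A}\mathbf{h}\rangle - \sum_{j \ge 2}\langle \mathbf{A}\mathbf{h}_{T_0}, \mathbf{A}\mathbf{h}_{T_j}\rangle$, the polarization consequence of the WRIP $|\langle \mathbf{A}\mathbf{u}, \mathbf{A}\mathbf{v}\rangle| \le \delta_{\mathbf{w},2k}\|\mathbf{u}\|_2\|\mathbf{v}\|_2$ for $\mathbf{u},\mathbf{v}$ supported on disjoint sets of combined weighted cardinality $\le 2k$, and the tube bound, dividing by $\|\mathbf{h}_{T_0}\|_2$ gives
\[
(1-\delta_{\mathbf{w},2k})\|\mathbf{h}_{T_0}\|_2 \le 2\sqrt{1+\delta_{\mathbf{w},2k}}\,\epsilon + \frac{\rho\,\delta_{\mathbf{w},2k}}{\sqrt{k}}\|\mathbf{h}_{S^c}\|_{\mathbf{w},1}.
\]
Now apply the weighted Cauchy--Schwarz inequality $\|\mathbf{h}_S\|_{\mathbf{w},1} = \sum_{i\in S} w_i|h_i| \le \sqrt{w(S)}\,\|\mathbf{h}_S\|_2 \le \sqrt{k}\,\|\mathbf{h}_{T_0}\|_2$, combine with the cone bound to get $\|\mathbf{h}_{S^c}\|_{\mathbf{w},1} \le \sqrt{k}\,\|\mathbf{h}_{T_0}\|_2 + 2\sigma_k(\mathbf{x})_{\mathbf{w},1}$, and substitute; with $\rho = 2\sqrt{2}$ the hypothesis $\delta_{\mathbf{w},2k} < \frac{1}{2\sqrt{2}+1}$ keeps the coefficient of $\|\mathbf{h}_{T_0}\|_2$ positive and lets one solve for $\|\mathbf{h}_{T_0}\|_2$.

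Finally, pass from $\mathbf{h}_{T_0}$ to $\mathbf{h}$ via $\|\mathbf{h}\|_2 \le \|\mathbf{h}_{T_0}\|_2 + \sum_{j\ge2}\|\mathbf{h}_{T_j}\|_2 \le \|\mathbf{h}_{T_0}\|_2 + \frac{\rho}{\sqrt{k}}\|\mathbf{h}_{S^c}\|_{\mathbf{w},1}$, bound $\|\mathbf{h}_{S^c}\|_{\mathbf{w},1}$ once more by $\sqrt{k}\|\mathbf{h}_{T_0}\|_2 + 2\sigma_k(\mathbf{x})_{\mathbf{w},1}$, and insert the estimate for $\|\mathbf{h}_{T_0}\|_2$ obtained above. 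Collecting the terms and simplifying reproduces exactly the constants $c_1 = \frac{6\sqrt{1+\delta_{\mathbf{w},2k}}}{1-(1+2\sqrt{2})\delta_{\mathbf{w},2k}}$ and $c_2 = \frac{4(1+(\sqrt{2}-1)\delta_{\mathbf{w},2k})}{1-(1+2\sqrt{2})\delta_{\mathbf{w},2k}}$, which finishes the proof. Since the statement is quoted verbatim as \cite[Theorem~3.4]{HSX2018}, one may alternatively just cite that reference; the argument above is its skeleton.
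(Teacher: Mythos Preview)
The paper does not prove this proposition at all; it is simply quoted from \cite[Theorem~3.4]{HSX2018} and invoked as a black box inside the proof of Theorem~\ref{Thm:S-WRIP}. Your closing remark that one may ``alternatively just cite that reference'' is therefore exactly what the paper does, and there is no in-paper proof to compare your sketch against.

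That said, your sketch is the correct Cand\`es-type skeleton and would reproduce the stated constants, with two small technical corrections. First, to make the inequality $\max_{i\in T_{j+1}}|h_i|\le \|\mathbf{h}_{T_j}\|_{\mathbf{w},1}/w(T_j)$ valid you should sort the coordinates of $S^c$ by decreasing $|h_i|/w_i$ rather than by $|h_i|$; then $\|\mathbf{h}_{T_j}\|_{\mathbf{w},1}=\sum_{i\in T_j}w_i^2(|h_i|/w_i)\ge w(T_j)\min_{i\in T_j}|h_i|/w_i$ gives the bound, and one obtains $\|\mathbf{h}_{T_{j+1}}\|_2\le \sqrt{w(T_{j+1})}\,\|\mathbf{h}_{T_j}\|_{\mathbf{w},1}/w(T_j)\le (2/\sqrt{k})\|\mathbf{h}_{T_j}\|_{\mathbf{w},1}$, i.e.\ the Stechkin constant is $2$, not $\sqrt{2}$. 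Second, the polarization bound cannot be applied directly to $\langle \mathbf{A}\mathbf{h}_{T_0},\mathbf{A}\mathbf{h}_{T_j}\rangle$ since $w(T_0\cup T_j)$ may reach $3k$; one splits $T_0=S\cup T_1$ and bounds each inner product separately, picking up an extra factor $\sqrt{2}$ via $\|\mathbf{h}_S\|_2+\|\mathbf{h}_{T_1}\|_2\le\sqrt{2}\,\|\mathbf{h}_{T_0}\|_2$. The product $2\cdot\sqrt{2}=2\sqrt{2}$ is precisely the $\rho$ you use in your displayed inequality, so the final arithmetic closes and the constants $c_1,c_2$ come out exactly as stated.
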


In this paper, we derive a sufficient condition for stable recovery of a weighted sparse signal from its phaseless measurements. For any weighted $k$-sparse signal $\mathbf{x_0}$, we turn back to consider the weighted $l_1$ minimization in phaseless compressed sensing:
\begin{equation}\label{model:wl1}
\min_{\mathbf{x}\in \mathbb{R}^N}\|\mathbf{x}\|_{\mathbf{w},1},\quad \mbox{subject to}\quad \||\mathbf{Ax}|-|\mathbf{Ax_0}|\|_{2}^2\le\epsilon^2.
\end{equation}

In the following, we first propose the notion of the SWRIP, which is a combination of the WRIP and the SRIP.
\begin{Definition}[SWRIP]\label{Def:WSRIP}
For a weight $\mathbf{w}\in\mathbb{R}^{N}$, a matrix $\mathbf{A}\in \mathbb{R}^{m\times N}$ is said to satisfy the SWRIP of order $k$ with bounds $\theta_{\mathbf{w},-},\;\theta_{\mathbf{w},+}\in (0,2)$, if
\begin{equation}\label{Def:WSRIP:1}
\theta_{\mathbf{w},-}\|\mathbf{x}\|_{2}^2\le\min_{I\subseteq[1:m],\;|I|\geq m/2}\|\mathbf{A}_I\mathbf{x}\|_2^2\le\max_{I\subseteq[1:m],\;|I|\geq m/2}\|\mathbf{A}_{I}\mathbf{x}\|_2^2\le\theta_{\mathbf{w},+}\|\mathbf{x}\|_2^2
\end{equation}
holds for all weighted $k$-sparse $\mathbf{x}\in \mathbb{R}^{N}$.
\end{Definition}

Based on the SWRIP, we present a reconstruction error estimation via the weighted $l_1$ minimization (\ref{model:wl1}) as follows.

\begin{Theorem}\label{Thm:S-WRIP}
Let $\mathbf{x_0}\in \mathbb{R}^N$, and $\mathbf{y}=\mathbf{Ax_0}+\mathbf{e}$ with $\|\mathbf{e}\|_2\le\epsilon$. Suppose that $\mathbf{A}\in \mathbb{R}^{m\times N}$ satisfies the SWRIP of order $2k$ with constants $\theta_{\mathbf{w},-}\in(1-\frac{1}{2\sqrt{2}+1},1),\;\theta_{\mathbf{w},+}\in(1,1+\frac{1}{2\sqrt{2}+1})$ for $k\ge 2\|\mathbf{w}\|_{\infty}^2$. Then, any solution $\hat{\mathbf{x}}$ to the weighted $l_1$ minimization $(\ref{model:wl1})$ satisfies
\begin{equation}\label{S-WRIP:SW1}
\min{\{\|\mathbf{\hat{x}}-\mathbf{x_0}\|_2,\;\|\mathbf{\hat{x}}+\mathbf{x_0}\|_2\}}\le c_1\epsilon+c_2\frac{\sigma_{k}(\mathbf{x_0})_{\mathbf{w},1}}{\sqrt{k}},
\end{equation}
where $c_1$ and $c_2$ are defined the same as (\ref{lem:WRIP:A3}) in Proposition~\ref{lem:WRIP}.
\end{Theorem}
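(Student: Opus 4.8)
The plan is to reduce the phaseless problem to the ordinary (linear-measurement) weighted recovery problem by splitting the measurement indices according to sign agreement, exactly as in the passage from the WNSP proof, and then invoke Proposition~\ref{lem:WRIP}. Concretely, let $\hat{\mathbf{x}}$ solve (\ref{model:wl1}), so $\||\mathbf{A}\hat{\mathbf{x}}|-|\mathbf{A}\mathbf{x_0}|\|_2^2 \le \epsilon^2$ and $\|\hat{\mathbf{x}}\|_{\mathbf{w},1}\le \|\mathbf{x_0}\|_{\mathbf{w},1}$ (the latter because $\mathbf{x_0}$ is feasible: indeed $\||\mathbf{A}\mathbf{x_0}|-|\mathbf{A}\mathbf{x_0}|\|_2 = 0 \le \epsilon$). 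Partition $[1:m] = I \cup I^c$ where $I = \{j : \langle a_j, \hat{\mathbf{x}}\rangle \text{ and } \langle a_j, \mathbf{x_0}\rangle \text{ have the same sign}\}$ (with some fixed convention when either inner product vanishes). On $I$ we have $|\langle a_j,\hat{\mathbf{x}}-\mathbf{x_0}\rangle| = \big||\langle a_j,\hat{\mathbf{x}}\rangle| - |\langle a_j,\mathbf{x_0}\rangle|\big|$, and on $I^c$ we have $|\langle a_j,\hat{\mathbf{x}}+\mathbf{x_0}\rangle| = \big||\langle a_j,\hat{\mathbf{x}}\rangle| - |\langle a_j,\mathbf{x_0}\rangle|\big|$. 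Without loss of generality (replacing $\hat{\mathbf{x}}$ by $-\hat{\mathbf{x}}$ if necessary, which changes neither the weighted $l_1$ norm nor the constraint) assume $|I| \ge m/2$; set $\mathbf{z} := \hat{\mathbf{x}} - \mathbf{x_0}$ if $|I|\ge m/2$ (otherwise $\mathbf{z} := \hat{\mathbf{x}} + \mathbf{x_0}$, using $I^c$).

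The key step is the bound $\|\mathbf{A}_I \mathbf{z}\|_2^2 = \sum_{j\in I} |\langle a_j, \hat{\mathbf{x}}-\mathbf{x_0}\rangle|^2 = \sum_{j\in I}\big||\langle a_j,\hat{\mathbf{x}}\rangle| - |\langle a_j,\mathbf{x_0}\rangle|\big|^2 \le \||\mathbf{A}\hat{\mathbf{x}}|-|\mathbf{A}\mathbf{x_0}|\|_2^2 \le \epsilon^2$. So $\|\mathbf{A}_I \mathbf{z}\|_2 \le \epsilon$ with $|I|\ge m/2$. Now I would run the proof of Proposition~\ref{lem:WRIP} (equivalently, of \cite[Theorem~3.4]{HSX2018}) but replacing every appearance of $\mathbf{A}$ acting on the error vector by $\mathbf{A}_I$: the structural input used there is (i) the feasibility inequality $\|\hat{\mathbf{x}}\|_{\mathbf{w},1}\le\|\mathbf{x_0}\|_{\mathbf{w},1}$, which we have, and (ii) a two-sided isometry bound $\theta_{\mathbf{w},-}\|\mathbf{z}'\|_2^2 \le \|\mathbf{A}_I\mathbf{z}'\|_2^2 \le \theta_{\mathbf{w},+}\|\mathbf{z}'\|_2^2$ for weighted $2k$-sparse $\mathbf{z}'$, which is exactly what the SWRIP of order $2k$ provides uniformly over all such $I$. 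The hypotheses $\theta_{\mathbf{w},-}>1-\frac{1}{2\sqrt2+1}$ and $\theta_{\mathbf{w},+}<1+\frac{1}{2\sqrt2+1}$ are precisely chosen so that the ``effective'' RIP constant $\delta := \max\{1-\theta_{\mathbf{w},-},\ \theta_{\mathbf{w},+}-1\}$ satisfies $\delta < \frac{1}{2\sqrt2+1}$, matching (\ref{lem:WRIP:A1}); and the constants $c_1,c_2$ come out as in (\ref{lem:WRIP:A3}) with $\delta_{\mathbf{w},2k}$ there replaced by this $\delta$ (one checks the formulas for $c_1,c_2$ are monotone in $\delta_{\mathbf{w},2k}$, so using the worst-case $\delta$ is safe; or more cleanly, one simply re-derives the estimate carrying $1\pm\delta$ in place of $1\pm\delta_{\mathbf{w},2k}$ throughout).

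A cleaner way to package the second step, avoiding re-deriving anything, is this: define a new matrix $\mathbf{B} := \mathbf{A}_I$, which has $|I|\ge m/2$ rows; the SWRIP of $\mathbf{A}$ of order $2k$ immediately implies $\mathbf{B}$ satisfies the ordinary WRIP of order $2k$ with constant $\delta_{\mathbf{w},2k}^{\mathbf{B}} \le \delta := \max\{1-\theta_{\mathbf{w},-},\theta_{\mathbf{w},+}-1\} < \frac{1}{2\sqrt2+1}$. Then with $\mathbf{y}' := \mathbf{B}\mathbf{x_0} + \mathbf{e}'$ where $\mathbf{e}' := \mathbf{B}\hat{\mathbf{x}} - \mathbf{B}\mathbf{x_0}$ has $\|\mathbf{e}'\|_2 = \|\mathbf{A}_I\mathbf{z}\|_2 \le \epsilon$, the vector $\hat{\mathbf{x}}$ is feasible for the corresponding program (\ref{model:cs1}) with matrix $\mathbf{B}$ and noise level $\epsilon$, and has weighted $l_1$ norm no larger than that of $\mathbf{x_0}$, hence is (a candidate for) its minimizer — so Proposition~\ref{lem:WRIP} applied to $\mathbf{B}$ gives $\|\mathbf{x_0}-\hat{\mathbf{x}}\|_2 \le c_1\epsilon + \frac{c_2}{\sqrt{k}}\sigma_k(\mathbf{x_0})_{\mathbf{w},1}$, which in the original notation is $\|\mathbf{z}\|_2 \le c_1\epsilon + \frac{c_2}{\sqrt k}\sigma_k(\mathbf{x_0})_{\mathbf{w},1}$, and since $\|\mathbf{z}\|_2 \ge \min\{\|\hat{\mathbf{x}}-\mathbf{x_0}\|_2,\|\hat{\mathbf{x}}+\mathbf{x_0}\|_2\}$ we are done. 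I expect the main obstacle to be a bookkeeping subtlety rather than a deep one: Proposition~\ref{lem:WRIP} as stated assumes $\tilde{\mathbf{x}}$ \emph{is} the solution, whereas here $\hat{\mathbf{x}}$ only satisfies the two properties (feasibility and $l_1$-optimality against $\mathbf{x_0}$) that its proof actually uses, so one must confirm that the argument of \cite[Theorem~3.4]{HSX2018} only needs those two properties (it does) rather than genuine minimality over all feasible points; and one must handle carefully the degenerate indices where an inner product is zero and the ``same sign'' assignment is ambiguous, and the WLOG sign flip, so that $|I|\ge m/2$ can always be arranged after possibly swapping $\hat{\mathbf x} \mapsto -\hat{\mathbf x}$ and $I \leftrightarrow I^c$.
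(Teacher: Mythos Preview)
Your proposal is correct and follows essentially the same route as the paper: split the measurement indices by sign agreement into $T$ and $T^c$, observe that on the larger half the phaseless residual equals the linear residual, note that the SWRIP forces $\mathbf{A}_T$ (or $\mathbf{A}_{T^c}$) to satisfy the ordinary WRIP with constant $\delta_{\mathbf{w},2k}\le\max\{1-\theta_{\mathbf{w},-},\theta_{\mathbf{w},+}-1\}<\frac{1}{2\sqrt{2}+1}$, and then invoke Proposition~\ref{lem:WRIP}. You are in fact slightly more careful than the paper in flagging that Proposition~\ref{lem:WRIP} is being applied to a vector $\hat{\mathbf{x}}$ that is only known to be feasible and to have $\|\hat{\mathbf{x}}\|_{\mathbf{w},1}\le\|\mathbf{x_0}\|_{\mathbf{w},1}$, rather than to the actual minimizer of the $\mathbf{A}_T$-problem; the paper simply writes ``combining (\ref{S-WRIP:SW6}), (\ref{S-WRIP:SW7}) and Proposition~\ref{lem:WRIP}'' without comment.
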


\begin{proof}
For any solution $\mathbf{\hat{x}}\in\mathbb{R}^N$ to $(\ref{model:wl1})$, it holds
\begin{equation}\label{S-WRIP:SW2}
\|\mathbf{\hat{x}}\|_{\mathbf{w},1}\le\|\mathbf{x_0}\|_{\mathbf{w},1},
\end{equation}
and
\begin{equation}\label{S-WRIP:SW3}
\||\mathbf{Ax}|-|\mathbf{Ax_0}|\|_{2}^2\le\epsilon^2.
\end{equation}
Let $a_j^T,\;j=1,2,\cdots,m$ be the rows of the measurement matrix $\mathbf{A}$. We divide the index set $\{1,2,\cdots,m\}$ into two subsets:
\[
T=\{j:{\rm{sign}}(\langle a_j,\mathbf{\hat{x}}\rangle)={\rm{sign}}(\langle a_j,\mathbf{x_0}\rangle)\},
\]
and
\[
T^c=\{j:{\rm{sign}}(\langle a_j,\mathbf{\hat{x}}\rangle)=-{\rm{sign}}(\langle a_j,\mathbf{x_0}\rangle)\}.
\]
Then, we know that either $|T|\ge m/2$ or $|T^c|\ge m/2$. First, we assume that $|T|\ge m/2$. It follows from (\ref{S-WRIP:SW3}) that
\begin{equation}\label{S-WRIP:SW4}
\|\mathbf{A}_{T}\mathbf{x}-\mathbf{A}_{T}\mathbf{x_0}\|_{2}^2+\|\mathbf{A}_{T^c}\mathbf{x}+\mathbf{A}_{T^c}\mathbf{x_0}\|_{2}^2\le\epsilon^2.
\end{equation}
By (\ref{S-WRIP:SW4}), we know that
\begin{equation}\label{S-WRIP:SW5}
\|\mathbf{A}_{T}\mathbf{\hat{x}}-\mathbf{A}_{T}\mathbf{x_0}\|_{2}^2\le\epsilon^2.
\end{equation}
Combining (\ref{S-WRIP:SW2}) and (\ref{S-WRIP:SW5}), we get
\begin{equation}\label{S-WRIP:SW6}
\mathbf{\hat{x}}\in\{ \mathbf{x}\in\mathbb{R}^N:\; \|\mathbf{x}\|_{\mathbf{w},1}\le\|\mathbf{x_0}\|_{\mathbf{w},1},\;\|\mathbf{A}_{T}\mathbf{x}-\mathbf{A}_{T}\mathbf{x_0}\|_{2}\le\epsilon.\}
\end{equation}
Since $\mathbf{A}$ satisfies the SWRIP of order $2k$, then we know that $\mathbf{A}_T$ satisfies the WRIP of order $2k$ with
\begin{equation}\label{S-WRIP:SW7}
\delta_{\mathbf{w},2k}\le\max\{1-\theta_{\mathbf{w},-},\;\theta_{\mathbf{w},+}-1\}<\frac{1}{2\sqrt{2}+1}.
\end{equation}
Combining (\ref{S-WRIP:SW6}), (\ref{S-WRIP:SW7}) and Proposition~\ref{lem:WRIP}, we get
\[
\|\mathbf{\hat{x}}-\mathbf{x_0}\|_2\le c_1\epsilon+c_2\frac{\sigma_{k}(\mathbf{x_0})_{\mathbf{w},1}}{\sqrt{k}},
\]
where $c_1$ and $c_2$ are defined the same as in Proposition~\ref{lem:WRIP}. Similarly, for the case $|T^c|\ge m/2$, we obtain
\[
\|\mathbf{\hat{x}}+\mathbf{x_0}\|_2\le c_1\epsilon+c_2\frac{\sigma_{k}(\mathbf{x_0})_{\mathbf{w},1}}{\sqrt{k}}.
\]
This completes the proof.
\end{proof}

By Theorem~\ref{Thm:S-WRIP}, we know that if the measurement matrix satisfies the SWRIP, then the weighted $l_1$ minimization (\ref{model:wl1}) can provide a stable solution.

\section{Conclusion}\label{sec:S6}
In this paper, we study the stable recovery of a weighted $k$-sparse signal from the magnitude of its measurements via the weighted $l_1$ minimization. First, we prove that the WNSP is a sufficient and necessary condition on the measurement matrix for exactly reconstructing a weighted $k$-sparse signal in phaseless compressed sensing. Moreover, we propose a new concept, called the SWRIP, and it is proved to be a sufficient condition for weighted $k$-sparse phase retrievable. In the future, we will focus on investigating some numerical algorithms for stable recovery of weighted sparse signals from its phaseless measurements.

\end{document}